\newcommand{\nd}{\text{d}}
\newcommand{\mf}{\mathbf}
\newcommand{\sigmoid}{\text{\normalfont  sigmoid}}
\newcommand{\ubar}[1]{\underaccent{\bar}{#1}}
\newcommand{\sgn}[2]{\left\lceil#1\right\rfloor^{#2}}
\newtheorem{proposition}{Proposition}
\newtheorem{definition}{Definition}
\newtheorem{corollary}{Corollary}
\newtheorem{remark}{Remark}
\newtheorem{theorem}{Theorem}
\begin{document}

\title{NN-ETM: Enabling safe neural network-based event-triggering mechanisms for consensus problems\footnote{The authors are with Departamento de Informática e Ingeniería de Sistemas (DIIS) and Instituto de Investigación en Ingeniería de Aragón (I3A), Universidad de Zaragoza, María de Luna 1, 50018, Zaragoza, Spain. 

This work was supported via projects PID2021-124137OB-I00 and TED2021-130224B-I00 funded by MCIN/AEI/10.13039/501100011033, by ERDF A way of making Europe and by the European Union NextGenerationEU/PRTR, by the Gobierno de Aragón under Project DGA T45-23R, by the Universidad de Zaragoza and Banco Santander, by the Consejo Nacional de Ciencia y Tecnología (CONACYT-Mexico) with grant number 739841, and by Spanish grant FPU20/03134.

{\color{red} This work has been submitted to the IEEE for possible publication. Copyright may be transferred without notice, after which this version may no longer be accessible.}}}

\author{Irene Perez-Salesa, Rodrigo Aldana-López, Carlos Sagüés}
\date{}


\maketitle

\begin{abstract}
Event-triggering mechanisms (ETM) have been developed for consensus problems to reduce communication while ensuring performance guarantees, but their design has grown increasingly complex by incorporating the agent's local and neighbor information. This typically results in ad-hoc solutions, which may only work for the consensus protocol under consideration. We aim to safely incorporate neural networks in the ETM to provide a general solution while guaranteeing performance. To decouple the stability analysis of the consensus protocol from the abstraction of the neural network, we derive design criteria for the consensus and ETM pair, allowing independent analysis of each element under mild constraints. Then, we propose NN-ETM, a novel ETM featuring a neural network, to optimize communication while preserving the stability guarantees of the consensus protocol.

\textbf{Keywords:} Consensus, input-to-state stability, event-triggered communication, data-driven methods. 
\end{abstract}

\section{Introduction}
Decentralized consensus problems have been widely studied in recent years due to their interest in multi-agent control and cooperation applications \cite{OlfatiSaber2007}. 
Generally, consensus algorithms require frequent communication amongst agents in order to achieve the desired performance, which can be prohibitive in resource-constrained networked control applications. 
To alleviate the communication load, event-triggering mechanisms (ETMs) have been proposed to reduce the number of necessary transmissions in networked and multi-agent systems, and their application to consensus problems has received considerable attention \cite{Ding2018}. In event-triggered schemes, transmissions are only performed when a certain condition is fulfilled. This condition is carefully designed to ensure a performance guarantee for the algorithms despite the reduction in communication \cite{Peng2018}.

More recently, dynamic event-triggering mechanisms (DETMs) have been developed, which incorporate an auxiliary dynamic variable to reduce the number of events further \cite{Girard2015}. DETMs have been exploited in various control and estimation problems \cite{Ge2021} and extended to multi-agent setups \cite{Ge2020}. 
Particularly, several works consider the application of DETMs for static consensus problems \cite{Liu2023}, dynamic average consensus \cite{George2018,Qian2023,Xu2024}, leader-follower control \cite{Du2020,Zhang2022,Zhang2023detm}, leaderless consensus control \cite{Hu2020,Zhao2021,Cao2022}, formation-containment control \cite{Zhang2024}, and resilient consensus under unreliable conditions \cite{Wen2022,Lin2024,He2024,Yang2024,Jia2024}.
Note that, in a multi-agent setup, the amount of information each agent can exploit to construct a triggering condition is greater than in the single-agent case since the information shared by neighboring agents can also be used. This leads to an increasingly complex design of DETMs: the auxiliary variable's dynamics can depend jointly on the event-triggered error and the disagreement with neighbors, for example. Advantageously combining these pieces of information is not trivial, as can be observed in the referenced works. Moreover, the ETM design is typically tailor-made for a particular consensus algorithm, providing an ad-hoc solution that may not generalize well to other cases.

Unlike hand-crafted approaches, data-driven methods such as Neural Networks (NNs) have been used to work around standard feature engineering in favor of automated design, achieving less conservative results. Its flexibility is currently being exploited for event-triggered control to learn complex system dynamics or optimal control policies \cite{Sedghi2022}. A few works also address the problem of learning communication jointly with the control actions in multi-agent setups \cite{Kesper2023,Shibata2021,Shibata2023}. The main drawback of incorporating this kind of data-driven technique is the difficulty in providing stability guarantees for the setup due to the abstraction introduced by NNs. 

In addition, there are reasons to advocate for more general frameworks in which solutions may be applied outside the concrete problems and assumptions for which they were designed. In this context, systems are often divided into interconnected elements that must work together in a control stack, so hierarchical approaches that decouple the design of each block can be beneficial \cite{Dorfler2024,Wang2024}. Such a decoupled approach could also be of interest for event-triggered consensus, motivating a block-wise design strategy that complies with the requirements to ensure boundedness of the consensus error, regardless of the particular characteristics of each block.

Motivated by this discussion, we are interested in exploiting NNs in ETMs for consensus problems to take advantage of their flexibility and abstraction in contrast to ad-hoc designs. Moreover, we aim to incorporate them safely, providing guarantees of boundedness for the consensus error. To do so, we propose the following contributions. First, we derive the design criteria for the consensus and ETM pair to ensure a bounded consensus error. These criteria allow separate analysis of each block under mild constraints, decoupling the design of one element from the other. Particularly, they allow us to analyze the input-to-state stability of the consensus error under an arbitrary ETM within a general class of triggers. Given that the stability analysis under this framework does not depend on the particular choice of ETM, we then propose the NN-ETM, a novel ETM featuring a NN, which provides a general solution to optimize communication in consensus problems while preserving the stability of the consensus protocol. Particularly, we show that a desired performance, in terms of the bound for the consensus error and the absence of Zeno behavior, can be guaranteed while using NN-ETM. In addition, we detail a process for training the proposed NN-ETM, including practical considerations. We include simulation experiments as a proof of concept, for which the training and testing code is available in a public repository.

\textbf{Notation:} Let the $n \times n$ identity matrix be denoted as $\mf{I}_n$ and $\mathds{1}_n=[1,\dots,1]^\top$ with dimension $n \times 1$. The operator $\| \bullet \|$ denotes the Euclidean norm. Let $\text{sign}(x) = 1$ if $x> 0, \text{sign}(x)=-1$ if $x<0$ and $\text{sign}(0)=0$ and let the element-wise application of this operator to vectors or matrices. Let $\otimes$ denote the Kronecker product. We use the classes of comparison functions $\mathcal{K}_\infty,\mathcal{KL}$ from \cite[Page 144]{Khalil2002}. We use $\lambda_{\max}(\bullet)$ to denote the maximum eigenvalue of a matrix.

\section{Problem Statement}\label{sec:prob-stat}

Consider a network of $N$ interconnected agents. The network is described by an undirected connected graph $\mathcal{G}=(\mathcal{V}, \mathcal{E})$, where the agents form the vertex set $\mathcal{V}=\{ 1, \dots, N \}$ and the edge set $\mathcal{E} \subseteq \mathcal{V} \times \mathcal{V}$ represents the communication links between agents. The network topology is characterized by its adjacency matrix $\mf{A}_\mathcal{G}$, equivalently by its Laplacian $\mf{Q}_\mathcal{G} = \text{diag}(\mf{A}_\mathcal{G}\mathds{1})-\mf{A}_\mathcal{G}$ and incidence matrix $\mf{D}_{\mathcal{G}}$ complying $\mf{Q}_{\mathcal{G}}=\mf{D}_{\mathcal{G}}\mf{D}_{\mathcal{G}}^\top$. Moreover, $\lambda_2(\mathcal{G})$ denotes the algebraic connectivity of the graph, i.e. the smallest non-zero eigenvalue of $\mf{Q}_{\mathcal{G}}$. The neighbors of an arbitrary agent $i\in\mathcal{V}$ are denoted by $\mathcal{N}_i\subseteq\mathcal{V}$ and $\mathcal{N}'_i=\mathcal{N}_i\cup\{i\}$.

Several consensus solutions can be written in terms of differential equations of the form
\begin{equation}\label{eq:consensus}
    \dot{\mf{z}}_i(t) = \mf{f}_i(t, \mf{z}_i(t), \{ \mf{m}_j(\mf{z}_j(t)) \}_{j \in \mathcal{N}_i'})
\end{equation}
for each agent $i$ and some appropriate $\mf{f}_i(\bullet)$, which may correspond to the actual dynamics of a physical system with state $\mf{z}_i(t)\in\mathbb{R}^n$, or the description of an iterative algorithm executed in a computing platform \cite{Dorfler2024}.
In \eqref{eq:consensus}, local interaction between agents occurs due to the terms $\{ \mf{m}_j(\mf{z}_j(t)) \}_{j \in \mathcal{N}_i'}$ which correspond to information that is communicated between neighbors, for some appropriate output message function $\mf{m}_j(\bullet)$. The goal for the agents is to reach an agreement as $$
    \lim_{t\to\infty} \| \mf{z}_i(t) - \mf{z}_j(t) \| = 0, \forall i,j \in \mathcal{V}.
$$
The simplest example is $\mf{f}_i(t,\mf{z}_i,\{\mf{m}_j(\bullet)\}_{j\in\mathcal{N}_i'}) = -\sum_{j\in\mathcal{N}_i}(\mf{z}_i-\mf{m}_j(\bullet))$, with $\mf{m}_j(\bullet) = (\bullet)$, resulting in the standard consensus iteration $$
\dot{\mf{z}}_i(t) = -\sum_{j\in\mathcal{N}_i}(\mf{z}_i(t)-\mf{z}_j(t)).$$

In event-triggered consensus, each agent evaluates a local event-triggering condition to decide when to communicate with its neighbors. Thus, agent $i$ only has access to $\mf{m}_j(\mf{z}_j(t)), j\in\mathcal{N}_i$ at event instants $t \in \{ \tau_k^j \}_{k=0}^\infty$, with $\tau_k^j$ being the $k$th event triggered by agent $j$. In these conditions, \eqref{eq:consensus} is replaced by
\begin{equation}\label{eq:consensus-ev}
    \dot{\mf{z}}_i(t) = \mf{f}_i(t, \mf{z}_i(t), \{ \mf{m}_j(\mf{z}_j(\tau_t^j)) \}_{j \in \mathcal{N}_i'}), 
\end{equation}
where $\tau_t^j = \max \{\tau_k^j \leq t \}$ represents the last event triggered by agent $j$, in which it has broadcast its state. 

In the following, we aim to design a NN-ETM for consensus problems. Note that the introduction of NNs typically hinders formal analysis, due to their high level of abstraction. Thus, we start by proposing some design criteria for the ETM and consensus pair, which allow to decouple the design for the ETM from the consensus protocol. This decoupling is necessary in order to analyze the stability of the consensus protocol under the ETM without relating it to the NN itself. Using the proposed criteria, we design and test the NN-ETM to ensure the boundedness of the consensus error.

\section{Design Criteria for ETM in Consensus}\label{sec:design-criteria}

\subsection{Input-to-State Stability}

Before introducing our framework to design the NN-ETM, recall the definition of input-to-state (practical) stability, adapted from \cite{Mironchenko2019}:

\begin{definition}\label{def:iss}
A system $\dot{\mf{x}}(t) = \mf{h}(t, \mf{x}(t), \mf{u}(t))\in\mathbb{R}^n$ is said to be uniformly \emph{input-to-state practically stable} (ISpS) with respect to the origin
if there exist a class $\mathcal{KL}$ function $\beta$, a class $\mathcal{K}_\infty$ function $\gamma$ and a constant $c$ such that, for any initial state $\mf{x}(0)$ and bounded input $\mf{u}(t)$, the solution $\mf{x}(t)$ satisfies
\begin{equation}\label{eq:iss}
    \| \mf{x}(t) \| \leq \beta(\| \mf{x}(0) \|, t) + \gamma \left(\sup_{s\in[0, t]} \| \mf{u}(s) \|\right)  + c.
\end{equation}
If the ISpS property holds for $c=0$, then the system is \emph{input-to-state stable} (ISS).
\end{definition}

\subsection{Interconnection of ISS consensus and ETM}
\label{sec:design:criteria}

\begin{figure}
   \centering
   \includegraphics[width=0.5\columnwidth]{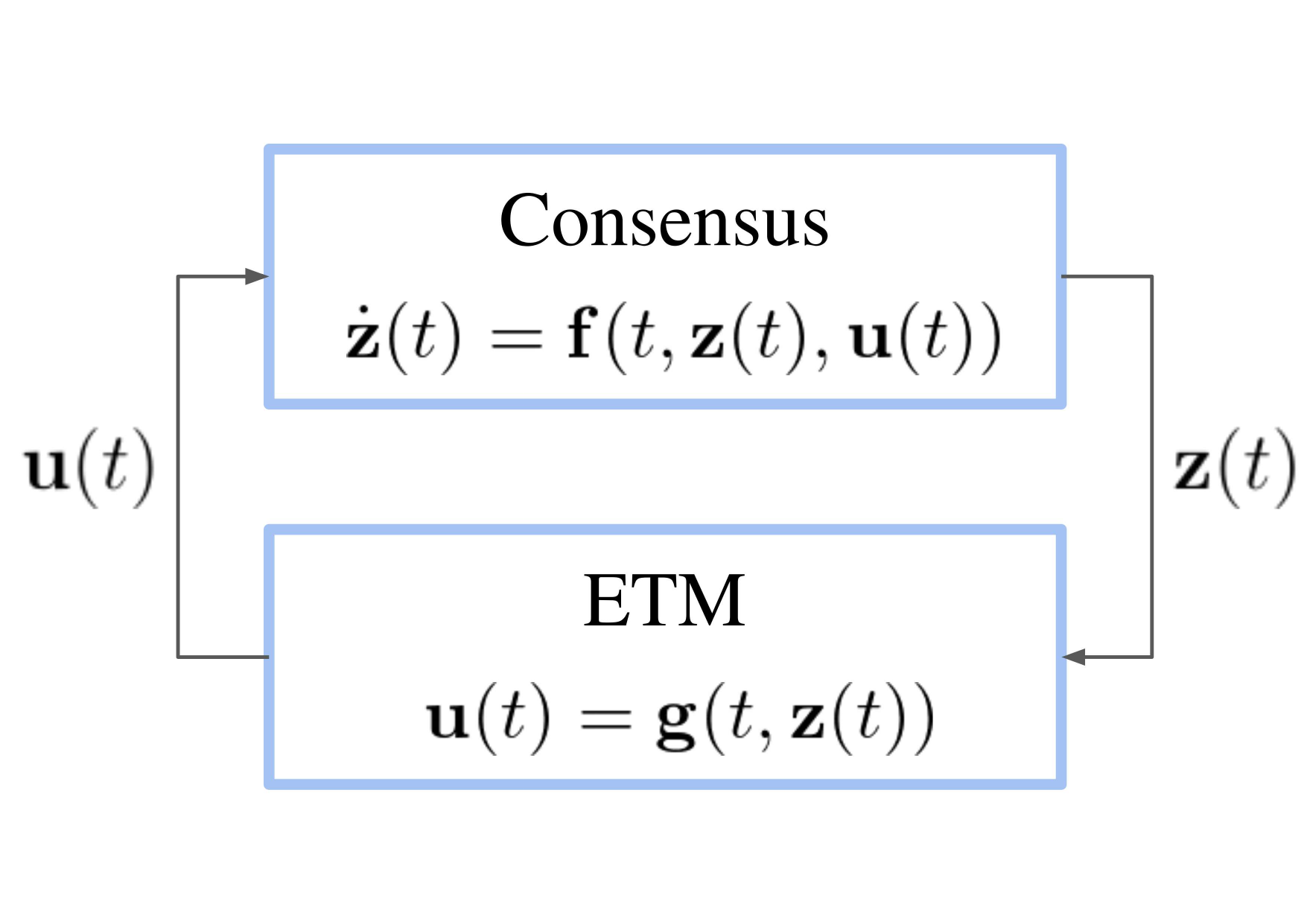}
   \caption{The event-triggered consensus problem can be globally viewed as the interaction between two interconnected blocks. We derive design criteria for each block to ensure stability guarantees for the interconnection.}
    \label{fig:blocks}
\end{figure}

Consider system \eqref{eq:consensus} with continuous-time communication, written in compact form as $\dot{\mf{z}}(t) = \mf{f}'(t,\mf{z}(t), \mf{m}(\mf{z}(t))$ by defining $\mf{z}(t) = [\mf{z}_1(t)^\top, \dots, \mf{z}_N(t)^\top]^\top$ along with $\mf{m}(\mf{z})=[\mf{m}_1(\mf{z}_1)^\top,\dots,\mf{m}_N(\mf{z}_N)^\top]^\top$ and an appropriate $\mf{f}'(\bullet)$. Similarly, the event-triggered protocol \eqref{eq:consensus-ev} is expressed as
\begin{equation}\label{eq:consensus-ev-vec}
\begin{aligned}
\dot{\mf{z}}(t) &= \mf{f}'(t,\mf{z}(t), \mf{m}(\mf{z}(t)) + \mf{u}(t)) =:\mf{f}(t,\mf{z}(t),\mf{u}(t)),
\end{aligned}
\end{equation}
where $\mf{u}(t) = [\mf{u}_1(t)^\top, \dots, \mf{u}_N(t)^\top ]^\top$ and $$\mf{u}_j(t) = \mf{m}_j(\mf{z}_j(\tau_t^j))-\mf{m}_j(\mf{z}_j(t))$$ describes the mismatch between the current values of $\mf{m}_j(\mf{z}_j(t))$ and the actual transmitted information $\mf{m}_j(\mf{z}_j(\tau_t^j))$. Therefore, the combined effect of the ETMs at the agents is written as a block 
\begin{equation}\label{eq:disturb-block}
\mf{u}(t) = \mf{g} (t, \mf{z}(t)),
\end{equation}
for $\mf{g}=[\mf{g}_1^\top,\dots,\mf{g}_N^\top]^\top$, with $$\mf{g}_j(t,\mf{z}(t)) = \mf{m}_j(\mf{z}_j(\tau_t^j))-\mf{m}_j(\mf{z}_j(t)).$$ The interconnection of the blocks \eqref{eq:consensus-ev-vec} and \eqref{eq:disturb-block} is illustrated in Figure \ref{fig:blocks}.

Given that the goal of the consensus protocol is to achieve agreement between the agents of the network, convergence is typically analyzed through the dynamics for the disagreement. Let $\bar{\mf{z}}(t)$ be the desired agreement value, e.g. in average consensus we have $\bar{\mf{z}}(t) = (1/N) \sum_{i=1}^N \mf{z}_i(t)$. Then, we can split 
\begin{equation}\label{eq:split-consensus-disagr}
\mf{z}(t) = (\mathds{1}_N \otimes \mf{I}_n) \bar{\mf{z}}(t) + \mf{x}(t),
\end{equation}
where $\mf{x}(t)$ is the disagreement vector, defined as $\mf{x}(t) = (\mf{H} \otimes \mf{I}_n) \mf{z}(t)$ with $\mf{H} = \mf{I}_N - (1/N)\mathds{1}_N\mathds{1}_N^\top$ being the projection matrix in the space orthogonal to consensus. Then, the dynamics for the disagreements are given by
\begin{equation}\label{eq:error-dyn}
    \dot{\mf{x}}(t) = \mf{h}(t, \mf{x}(t), \mf{u}(t)) := (\mf{H} \otimes \mf{I}_n) \dot{\mf{z}}(t).
\end{equation}
Note that $\mf{x}(t)=\mf{0}$ for some $t\in\mathbb{R}$ means that consensus $\mf{z}_i(t)=\mf{z}_j(t)$ for all $i,j\in\mathcal{V}$ is complied for that instant.

In the following theorem, we summarize general design criteria for the ETM and consensus blocks in order to guarantee the boundedness of the consensus error under event-triggered communication. Subsequently, we provide additional results to show general cases of designs where these criteria can be complied with.

\begin{theorem}[Design Criteria]\label{prop:iss}
Consider an event-triggered consensus protocol \eqref{eq:consensus-ev-vec} with disagreement dynamics \eqref{eq:error-dyn}. Then, there exists $\xi\geq 0$ such that for all initial conditions $\mf{x}(0)$, the consensus error is bounded $\forall t\geq 0$ with $\limsup_{t\to\infty} \|\mf{x}(t)\| \leq \xi$, if the following design conditions are fulfilled:
\begin{enumerate}[(i)]
    \item\label{it:existence} The functions $\{\mf{f}_i,\mf{m}_i\}_{i\in\mathcal{V}}$ ensure existence of solutions of \eqref{eq:consensus} for all $t\geq 0$.
    \item \label{it:zeno} The ETM guarantees a minimum inter-event time.
    \item \label{it:iss} The disagreement dynamics \eqref{eq:error-dyn} are ISpS.
    \item \label{it:bounded-u} The disturbance $\mf{u}(t)$ produced by the ETM has a uniformly bounded Euclidean norm for all $t\geq 0$.
\end{enumerate}
\end{theorem}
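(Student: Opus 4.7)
The plan is to apply Definition \ref{def:iss} directly to the disagreement dynamics \eqref{eq:error-dyn}, treating $\mathbf{u}(t)$ generated by \eqref{eq:disturb-block} as the exogenous input to the ISpS system. The four hypotheses are essentially tailored to make every ingredient of the ISpS bound \eqref{eq:iss} well-defined, so the proof should reduce to assembling them in the right order and then pushing $t \to \infty$.

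First, I would verify that $\mathbf{x}(t)$ is well-defined on $[0,\infty)$. Condition \ref{it:existence} gives existence of $\mathbf{z}(t)$ for the closed-loop system \eqref{eq:consensus-ev-vec}, and via the linear map $\mathbf{x}(t) = (\mathbf{H}\otimes\mathbf{I}_n)\mathbf{z}(t)$ this transfers to the disagreement. Condition \ref{it:zeno} rules out accumulation points in the event sequence $\{\tau_k^j\}$, so $\tau_t^j$ is well-defined for every $t\geq 0$ and $\mathbf{u}(t) = \mathbf{g}(t,\mathbf{z}(t))$ is a piecewise-continuous signal, hence a legitimate input in the sense of Definition \ref{def:iss}. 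Condition \ref{it:bounded-u} then supplies a uniform constant $U := \sup_{t\geq 0}\|\mathbf{u}(t)\| < \infty$.

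With these pieces in place, I would invoke condition \ref{it:iss}: there exist $\beta\in\mathcal{KL}$, $\gamma\in\mathcal{K}_\infty$, and $c\geq 0$ such that, for every $t\geq 0$,
\begin{equation}
\|\mathbf{x}(t)\| \;\leq\; \beta(\|\mathbf{x}(0)\|,t) \;+\; \gamma\!\left(\sup_{s\in[0,t]}\|\mathbf{u}(s)\|\right) + c \;\leq\; \beta(\|\mathbf{x}(0)\|,t) + \gamma(U) + c.
\end{equation}
Monotonicity of $\gamma$ justifies the second inequality. This already delivers boundedness of $\mathbf{x}(t)$ on $[0,\infty)$. Taking $t \to \infty$ and using $\beta(r,t) \to 0$ for every fixed $r$, which is part of the definition of a $\mathcal{KL}$ function, gives
\begin{equation}
\limsup_{t\to\infty} \|\mathbf{x}(t)\| \;\leq\; \gamma(U) + c \;=:\; \xi,
\end{equation}
which is exactly the claimed bound.

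There is no single hard step here; the proof is a composition argument and the challenge is really bookkeeping, i.e.\ making sure that the signal-level hypotheses \ref{it:existence}, \ref{it:zeno}, \ref{it:bounded-u} are precisely what is needed to legitimately feed $\mathbf{u}(t)$ into the ISpS estimate of \ref{it:iss}. The subtlest point worth spelling out is why \ref{it:zeno} is needed at all: without a minimum inter-event time, Zeno accumulation could prevent $\mathbf{u}(t)$ from being defined on the whole half-line, undermining both \ref{it:existence} for the closed loop and the applicability of the ISpS inequality on $[0,\infty)$. Once that is clarified, the remainder of the argument is just the chain of inequalities above.
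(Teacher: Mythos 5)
Your proposal is correct and follows essentially the same route as the paper: use conditions (\ref{it:existence}) and (\ref{it:zeno}) to ensure trajectories (and the event-triggered input $\mf{u}(t)$) are defined on all of $[0,\infty)$, then feed the uniform bound $U$ from (\ref{it:bounded-u}) into the ISpS estimate of (\ref{it:iss}) and let $t\to\infty$ so that the $\mathcal{KL}$ term vanishes, yielding $\xi=\gamma(U)+c$. The only cosmetic difference is that the paper phrases the existence step as ``item (\ref{it:existence}) gives solutions between events, item (\ref{it:zeno}) excludes Zeno so $\lim_{k\to\infty}\tau_k^i=\infty$,'' whereas you attribute closed-loop existence to (\ref{it:existence}) directly and then note Zeno exclusion; the substance is identical.
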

\begin{proof}
First, we show that trajectories $\mf{x}(t)$ exist for all $t\geq 0$. Item (\ref{it:existence}) ensures the existence of solutions between events. In addition, due to the ETM, the existence of solutions can be lost due to Zeno behavior, i.e. triggering an infinite amount of events in a finite time interval. This is prevented by item (\ref{it:zeno}), which ensures the absence of Zeno behavior so that $\lim_{k\to\infty}\tau_k^i=\infty$ for all $i\in\mathcal{V}$ \cite{Yu2021}. Hence, solutions of \eqref{eq:error-dyn} exist $\forall t\geq 0$. 

For the boundedness of the consensus error, item (\ref{it:iss}) implies $$\limsup_{t\to\infty} \|\mf{x}(t)\| \leq \gamma\left(\sup_{t\geq0} \|\mf{u}(t)\|\right) + c\leq \gamma(U) + c=:\xi$$ with $U$ being the uniform bound of $\|\mf{u}(t)\|$ from item (\ref{it:bounded-u}), completing the proof.
\end{proof}

Notice that items (\ref{it:existence}) and (\ref{it:iss}) in Theorem \ref{prop:iss} relate to the design of the consensus protocol, while items (\ref{it:zeno}) and (\ref{it:bounded-u}) are restrictions for the design of the ETM. 
Particularly, note that item (\ref{it:iss}) implies that the consensus protocol can be analyzed with respect to a generic communication disturbance, without assuming a particular ETM. This provides freedom to introduce data-driven techniques such as NNs in the ETM under some mild constraints (namely, the boundedness of $\mf{u}(t)$), without hindering formal analysis of the consensus protocol.
These requirements can be relaxed for concrete choices of consensus protocols and ETMs. In the following proposition, we study the standard ETM choice
\begin{equation}
\label{eq:send:on:delta}
\tau_{k+1}^i = \inf\{ t > \tau_k^i : \| \mf{m}_i(\mf{z}_i(t)) - \mf{m}_i(\mf{z}_i(\tau_k^i)) \| \geq \delta(t)\} ,
\end{equation}
where $\delta(t)\geq0$ may have additional dynamics.
In this setting, item (\ref{it:bounded-u}) of Theorem \ref{prop:iss} can be relaxed as follows.

\begin{proposition}[Performance of the ETM in \eqref{eq:send:on:delta}]\label{cor:iss}
Consider a consensus protocol \eqref{eq:consensus-ev-vec} complying items \eqref{it:existence} and \eqref{it:iss} in Theorem \ref{prop:iss}, along with the ETM \eqref{eq:send:on:delta}. Then, solutions to \eqref{eq:error-dyn} fulfill
\begin{equation}\label{eq:bound-z}
    \|\mf{x}(t)\| \leq \beta(\| \mf{x}(0) \|, t) + \gamma' \left(\sup_{s\in[0, t]}  \delta(s) \right) + c
\end{equation}
for all $t$ in which the trajectory $\mf{x}(t)$ exists, for some $\beta \in \mathcal{KL}, \, \gamma' \in \mathcal{K}_\infty$.
Moreover, if \eqref{eq:error-dyn} is ISS, $\lim_{t\to\infty} \delta(t) = 0$, and $\mf{x}(t)$ exists for all $t\geq 0$, then $\lim_{t\to\infty} \|\mf{x}(t)\| = 0$.
\end{proposition}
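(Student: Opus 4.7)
My plan is to reduce \eqref{eq:bound-z} to a direct application of the ISpS hypothesis by first turning the ETM rule into a pointwise upper bound on the communication disturbance $\mf{u}(t)$ in terms of $\delta(t)$, then invoking item (\ref{it:iss}) of Theorem \ref{prop:iss}. The convergence claim will follow from the standard converging-input-converging-state property of ISS systems.

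First, I would exploit the structure of \eqref{eq:send:on:delta} to upper bound $\|\mf{u}(t)\|$. For every $t \in [\tau_k^i, \tau_{k+1}^i)$, the infimum definition forces $\|\mf{m}_i(\mf{z}_i(t)) - \mf{m}_i(\mf{z}_i(\tau_k^i))\| < \delta(t)$, which on that interval is exactly $\|\mf{u}_i(t)\| < \delta(t)$ since $\tau_t^i = \tau_k^i$; at a triggering instant $t = \tau_{k+1}^i$ the transmission resets $\mf{u}_i(t) = 0$. Stacking the $N$ agent-wise contributions then yields $\|\mf{u}(t)\| \leq \sqrt{N}\,\delta(t)$ for every $t$ at which $\mf{x}(t)$ exists.

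Next, substituting this bound into the ISpS inequality \eqref{eq:iss} from item (\ref{it:iss}) of Theorem \ref{prop:iss} and using monotonicity of $\gamma$, I would obtain $\|\mf{x}(t)\| \leq \beta(\|\mf{x}(0)\|, t) + \gamma\bigl(\sqrt{N}\,\sup_{s\in[0,t]} \delta(s)\bigr) + c$. Defining $\gamma'(r) := \gamma(\sqrt{N}\,r)$, which remains of class $\mathcal{K}_\infty$ since $\gamma$ is, the resulting inequality matches \eqref{eq:bound-z}.

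For the asymptotic statement, I would specialize to $c = 0$ (ISS) and use a converging-input-converging-state argument. Given $\delta(t)\to 0$, the disturbance bound above forces $\mf{u}(t)\to 0$. For arbitrary $\epsilon > 0$, I would first choose $T_1$ large enough so that $\sup_{s\geq T_1} \|\mf{u}(s)\| \leq \gamma^{-1}(\epsilon/2)$, and apply the ISS estimate restarted at $t_0 = T_1$ (the semigroup form, viewing $\mf{x}(T_1)$ as a new initial condition in Definition \ref{def:iss}) to deduce $\|\mf{x}(t)\| \leq \beta(\|\mf{x}(T_1)\|, t-T_1) + \epsilon/2$. Global existence of $\mf{x}$ together with the already-proven bound ensures $\|\mf{x}(T_1)\|$ is finite, so the $\mathcal{KL}$ term drops below $\epsilon/2$ for $t$ sufficiently large, giving $\|\mf{x}(t)\| \leq \epsilon$ and hence $\mf{x}(t)\to 0$. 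The disturbance bound and the rescaling of $\gamma$ are essentially routine; the main subtle point I expect is justifying the restart step, since Definition \ref{def:iss} is stated from $t_0 = 0$ and one must argue that the same comparison functions apply when the estimate is taken from a shifted initial time, which is standard for uniformly ISS systems but worth stating explicitly.
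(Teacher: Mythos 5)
Your proof is correct, and the first half coincides with the paper's own argument: the trigger \eqref{eq:send:on:delta} gives $\|\mf{u}_i(t)\|\leq\delta(t)$ between events, hence $\|\mf{u}(t)\|\leq\sqrt{N}\,\delta(t)$, and \eqref{eq:bound-z} follows from the ISpS estimate with $\gamma'(\bullet):=\gamma(\sqrt{N}\,\bullet)$. For the asymptotic claim the paper is much terser: it simply sets $c=0$ and passes the limit inside $\gamma'$, writing $\lim_{t\to\infty}\|\mf{x}(t)\|\leq\gamma'(\lim_{t\to\infty}\delta(t))=0$, which taken literally skips over the fact that the bound \eqref{eq:bound-z} involves $\sup_{s\in[0,t]}\delta(s)$, a quantity that does not vanish as $t\to\infty$. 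Your converging-input/converging-state argument --- choosing $T_1$ so the tail of the disturbance is small and restarting the ISS estimate from $\mf{x}(T_1)$ --- is exactly the standard justification that the paper's shortcut implicitly relies on, and you correctly flag the one point needing care, namely that Definition \ref{def:iss} is stated from $t=0$ and the restart requires the (uniform) time-shifted version of the estimate. So your route is the same in substance but more rigorous on the final step; the only cost is length, and the only thing you should make explicit if writing it out is that the uniformity asserted in Definition \ref{def:iss} is what licenses applying $\beta$ and $\gamma$ unchanged from the shifted initial time $T_1$.
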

\begin{proof}
The design of the ETM in \eqref{eq:send:on:delta} ensures $$\| \mf{u}_i(t)\| = \|\mf{m}_i(\mf{z}_i(t)) - \mf{m}_i(\mf{z}_i(\tau_t^i)) \| \leq \delta(t)$$for all $t$ in which solutions exist and therefore $\| \mf{u}(t) \| \leq \sqrt{N} \delta(t)$. Moreover, note that the ISpS property in \eqref{eq:iss} is complied, from which \eqref{eq:bound-z} follows by picking $\gamma'(\bullet):=\gamma(\sqrt{N}\bullet)$. For the last part of the proposition, recall that ISS implies that $c=0$ in \eqref{eq:iss} and \eqref{eq:bound-z}, so that the steady-state consensus error is uniquely determined by the input. As a result, if solutions exist for all $t\geq 0$, then $$\lim_{t\to\infty} \|\mf{x}(t)\| \leq \gamma'(\lim_{t\to\infty} \delta(t))=0$$ if $\lim_{t\to\infty}\delta(t)=0$, completing the proof.
\end{proof}

As mentioned in the previous result, convergence may only occur if solutions exist for all $t\geq 0$, which is only prevented if Zeno behavior occurs. For this purpose, the following result establishes the existence of a minimum inter-event time for \eqref{eq:send:on:delta} under very mild conditions:
\begin{proposition}[Minimum inter-event time for \eqref{eq:send:on:delta}]
\label{cor:zeno}
Consider the consensus protocol \eqref{eq:consensus-ev} with the ETM in \eqref{eq:send:on:delta} under the following conditions:
\begin{enumerate}[(i)]
    \item Items \eqref{it:existence} and \eqref{it:iss} in Theorem \ref{prop:iss} are complied.
    \item\label{it:boundedfunctions} The functions $\mf{f}_i(\bullet)$ have bounded outputs and $\mf{m}_i(\bullet)$ are differentiable for all values of the arguments, for all $i\in\mathcal{V}$.
    \item\label{it:threshold} There exist 
 $0<\underline{\delta}\leq\overline{\delta}$ such that $\delta(t)\in[\underline{\delta},\overline{\delta}]$, $\forall t\geq 0$.
    \item\label{it:boundedconsensus} For every initial condition of \eqref{eq:consensus-ev}, there exists $\bar{Z}>0$ such that the consensus trajectory $\bar{\mf{z}}(t)$ satisfies $\|\bar{\mf{z}}(t)\|\leq \bar{Z}$ for all $t$ in which $\bar{\mf{z}}(t)$ exists.
\end{enumerate}
Then, every trajectory of \eqref{eq:consensus-ev} exists for all $t\geq 0$, and for each trajectory there exists $\tau>0$ such that $\tau^{i}_{k+1}-\tau^i_{k}\geq \tau$ for all $i\in\mathcal{V}, k\geq 0$. 
\end{proposition}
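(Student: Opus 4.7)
The plan is to exploit the fact that, between events, the trigger in \eqref{eq:send:on:delta} prescribes $\|\mf{u}_i(t)\|\leq\delta(t)\leq\overline{\delta}$ by construction, so the input to the disagreement dynamics is a priori bounded by $\sqrt{N}\overline{\delta}$. Working on the maximal interval of existence $[0,T^*)$ of a given trajectory, I would first show that the full state $\mf{z}(t)$ stays in a bounded set that depends only on the initial condition and on $\overline{\delta}$. Then I would use the differentiability of $\mf{m}_i$ together with the boundedness of $\mf{f}_i$ to bound $\tfrac{d}{dt}\mf{m}_i(\mf{z}_i(t))$ uniformly on that set, and from this derive a positive lower bound on the inter-event times. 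The absence of Zeno behavior, combined with item \eqref{it:existence}, will then give $T^*=\infty$.

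In more detail: first I would observe that on $[0,T^*)$ the ETM enforces $\|\mf{u}(t)\|\leq\sqrt{N}\overline{\delta}$, so item \eqref{it:iss} and Definition \ref{def:iss} yield a constant $X>0$ with $\|\mf{x}(t)\|\leq \beta(\|\mf{x}(0)\|,0)+\gamma(\sqrt{N}\overline{\delta})+c=:X$. Combining with condition \eqref{it:boundedconsensus} through the decomposition \eqref{eq:split-consensus-disagr} gives $\|\mf{z}(t)\|\leq \sqrt{N}\bar Z+X=:Z$ for every $t\in[0,T^*)$. Hence each $\mf{z}_i(t)$ lives in the compact ball $\mathcal{B}=\{\mf{z}\in\mathbb{R}^n:\|\mf{z}\|\leq Z\}$.

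Next I would bound the time derivative of the message. By item \eqref{it:boundedfunctions}, $\|\mf{f}_i\|\leq F$ for some $F>0$, so $\|\dot{\mf{z}}_i(t)\|\leq F$ on $[0,T^*)$ regardless of the arguments, and the Jacobian $\nabla\mf{m}_i$ is continuous and therefore bounded on the compact set $\mathcal{B}$ by some $L>0$. Consequently $\|\tfrac{d}{dt}\mf{m}_i(\mf{z}_i(t))\|\leq LF=:M^*$ for all $t\in[0,T^*)$. Integrating between two consecutive events of agent $i$,
\begin{equation}
\|\mf{m}_i(\mf{z}_i(t))-\mf{m}_i(\mf{z}_i(\tau_k^i))\|\leq M^*(t-\tau_k^i),\quad t\in[\tau_k^i,\tau_{k+1}^i),
\end{equation}
and since the $(k+1)$-th event requires the left-hand side to reach $\delta(\tau_{k+1}^i)\geq\underline{\delta}$, one obtains $\tau_{k+1}^i-\tau_k^i\geq \underline{\delta}/M^*=:\tau>0$, uniformly in $i$ and $k$.

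Finally, to close the argument, I would note that this uniform lower bound on inter-event times excludes the accumulation of events in finite time, so the only way $T^*$ could be finite is through blow-up of $\mf{z}(t)$ between events; but $\|\mf{z}(t)\|\leq Z$ on $[0,T^*)$ together with item \eqref{it:existence} rules this out, giving $T^*=\infty$. The main subtlety I expect is the mild circularity of using ISpS bounds on $[0,T^*)$ to infer both the trajectory bound and the non-Zeno property which in turn extend $T^*$ to infinity; this should be resolved by carefully invoking the ISpS estimate only over the open maximal interval and then passing to the limit, rather than attempting a global estimate upfront.
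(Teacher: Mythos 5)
Your proposal is correct and follows essentially the same route as the paper's proof: work on the maximal interval of existence, use the ISpS estimate with $\|\mf{u}(t)\|\leq\sqrt{N}\overline{\delta}$ and condition (\ref{it:boundedconsensus}) to confine the state, bound $\tfrac{d}{dt}\mf{m}_i(\mf{z}_i(t))$ via the bounded $\mf{f}_i$ and differentiable $\mf{m}_i$, and integrate between events against the lower threshold $\underline{\delta}$ to obtain $\tau=\underline{\delta}/M^*$ and hence global existence. The only cosmetic difference is that you bound $\dot{\mf{z}}_i$ directly by the bound on $\mf{f}_i$ while the paper routes the same estimate through the $(\bar{\mf{z}},\mf{x})$ decomposition, and your explicit handling of the circularity on $[0,T^*)$ matches the paper's argument.
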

\begin{proof}
Consider a trajectory of \eqref{eq:consensus-ev} along with the trigger \eqref{eq:send:on:delta} for arbitrary initial conditions. Let $T\in\mathbb{R}\cup\{\infty\}$ be the greatest possible time such that $\mf{x}(t)$ exists for all $t\in[0,T)$. Henceforth, Theorem \ref{prop:iss} is used as well as the arguments in the first part of the proof of Proposition \ref{cor:iss} to conclude that $\mf{x}(t)$ complies with the bound \eqref{eq:bound-z} for all $t\in[0,T)$. Moreover, set
$X:=\beta(\|\mf{x}(0)\|,0)+\gamma'(\overline{\delta})+c$ so that $\|\mf{x}(t)\|\leq X$ for all $t\in[0,T)$. Define $\mf{p}(t) := \mf{m}(\mf{z}(t))$ and recall \eqref{eq:split-consensus-disagr}. Then, we have
$$
\dot{\mf{p}}(t) = \frac{\partial \mf{m}}{\partial \mf{z}}((\mathds{1}_N\otimes\mf{I}_n)\dot{\bar{\mf{z}}}(t)+ \dot{\mf{x}}(t))=:\mf{q}(t,\bar{\mf{z}}(t),\mf{x}(t)),
$$
where $\mf{q}(\bullet)$ is obtained from the dynamics of $\bar{\mf{z}}(t)$ and $\mf{x}(t)$ directly from \eqref{eq:consensus-ev}. Note that $\mf{q}(\bullet)$ is bounded due to item (\ref{it:boundedfunctions}). Moreover, the uniform bound for $\|\mf{x}(t)\|$ obtained before in conjunction with the uniform bound for $\|\bar{\mf{z}}(t)\|$ in item (\ref{it:boundedconsensus}) allows to ensure the existence of $Q=\sup\{\|\mf{q}(t,\bar{\mf{z}},\mf{x})\| : t\in[0,T), \|\bar{\mf{z}}\|\leq \bar{Z}, \|\mf{x}\|\leq X\}$  such  that $\|\dot{\mf{p}}(t)\|\leq Q$ for all $t\in[0,T)$. Note that $Q$ depends on the particular trajectory of the system. Integrating over an arbitrary $[\tau_k^i,t]\subset[0,T)$, $\mf{p}(t)-\mf{p}(\tau_k^i) = \int_{\tau_k^i}^{t}\mf{q}(s,\bar{\mf{z}}(s),\mf{x}(s))\text{d}s$. Moreover, note that 
\begin{equation}
\begin{aligned}
&\|\mf{m}_i(\mf{z}_i(t))-\mf{m}_i(\mf{z}_i(\tau_k^i))\|\leq \|\mf{m}(\mf{z}(t))-\mf{m}(\mf{z}(\tau_k^i))\|\\&=\|\mf{p}(t)-\mf{p}(\tau_k^i)\|\leq Q(t-\tau_k^i)<\underline{\delta}\leq \delta(t),
\end{aligned}
\end{equation}
for all $t\in[\tau_k^i,\tau_k^i+\tau)$ with $\tau=\underline{\delta}/Q>0$. Therefore, $\tau_{k+1}^i\geq \tau_k^i+\tau$, which implies $\lim_{k\to\infty}\tau_k^i=T=\infty$ for all $i\in\mathcal{V}$, completing the proof. 
\end{proof}

Given the decoupled design conditions for the ETM and the consensus protocol, we are now ready to introduce our proposal for a NN-ETM. 

\section{NN-ETM: Neural Network-Based Event-Triggering Mechanism}\label{sec:nn-etm}

\subsection{NN-ETM Structure}
We propose that each agent $i$ decide its own event instants according to the ETM \eqref{eq:send:on:delta} in conjunction with
\begin{equation}\label{eq:trigger-learning}
    \delta(t)=  \sigma\eta_i(t) + \varepsilon,
\end{equation}
where the variable $\eta_i(t)\in[0,1]$ is determined by a local NN, such that an appropriate form of the trigger is learned from data, aiming to optimize the behavior of the setup according to a certain cost function. The parameters $\sigma\geq 0, \varepsilon> 0$ are user-defined constants. We refer to this strategy as NN-ETM. 

The following result summarizes the performance guarantees for a consensus algorithm under our NN-ETM:
\begin{corollary}[Guarantees for NN-ETM]\label{cor:performance-nnetm}
Consider a consensus protocol \eqref{eq:consensus-ev-vec} complying the conditions (\ref{it:existence}), (\ref{it:boundedfunctions}) and (\ref{it:boundedconsensus}) in Proposition \ref{cor:zeno}, along with the NN-ETM given by \eqref{eq:send:on:delta}-\eqref{eq:trigger-learning} with $\sigma \geq 0$, $\varepsilon > 0$. Then, the solutions to \eqref{eq:error-dyn} fulfill
\begin{equation}\label{eq:bound-nnetm}
    \|\mf{x}(t)\| \leq \beta(\| \mf{x}(0) \|, t) + \gamma' (\sigma + \varepsilon) + c
\end{equation}
for all $t$ in which the trajectory $\mf{x}(t)$ exists, for some $\beta \in \mathcal{KL}, \, \gamma' \in \mathcal{K}_\infty$ and constant $c \geq 0$.
Moreover, there is a minimum inter-event time $\tau > 0$ such that $\tau^{i}_{k+1}-\tau^i_{k}\geq \tau$ for all $i\in\mathcal{V}, k\geq 0$. 
\end{corollary}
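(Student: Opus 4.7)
The plan is to show that the NN-ETM is a particular instance of the ETM template \eqref{eq:send:on:delta} for which Propositions \ref{cor:iss} and \ref{cor:zeno} can be invoked directly, with the bound on $\delta(t)$ inherited from the structure of \eqref{eq:trigger-learning}.

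First I would verify that $\delta(t)=\sigma\eta_i(t)+\varepsilon$ is uniformly bounded from below and above: since $\eta_i(t)\in[0,1]$, $\sigma\geq 0$ and $\varepsilon>0$, one has $\varepsilon\leq\delta(t)\leq\sigma+\varepsilon$ for all $t\geq 0$, so condition (\ref{it:threshold}) of Proposition \ref{cor:zeno} holds with $\underline{\delta}=\varepsilon$ and $\overline{\delta}=\sigma+\varepsilon$. The remaining hypotheses of Proposition \ref{cor:zeno}, namely existence of solutions and ISpS of the disagreement dynamics (from items (\ref{it:existence}) and (\ref{it:iss}) of Theorem \ref{prop:iss}), bounded $\mf{f}_i$ with differentiable $\mf{m}_i$, and bounded consensus trajectory, are explicitly assumed in the statement of the corollary. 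Applying Proposition \ref{cor:zeno} therefore yields both global existence of $\mf{x}(t)$ on $[0,\infty)$ and the uniform minimum inter-event time $\tau>0$ with $\tau^i_{k+1}-\tau^i_k\geq\tau$.

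Once global existence is secured, I would invoke Proposition \ref{cor:iss} to obtain $\|\mf{x}(t)\|\leq\beta(\|\mf{x}(0)\|,t)+\gamma'\bigl(\sup_{s\in[0,t]}\delta(s)\bigr)+c$ for some $\beta\in\mathcal{KL}$, $\gamma'\in\mathcal{K}_\infty$ and $c\geq 0$. Substituting the uniform upper bound $\sup_{s\in[0,t]}\delta(s)\leq\sigma+\varepsilon$ and using the monotonicity of $\gamma'$ then gives \eqref{eq:bound-nnetm}.

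The proof is essentially bookkeeping that the NN-ETM fits the scope of the earlier propositions, so the only step requiring real care is inspecting the threshold bounds: the architectural constraint $\eta_i(t)\in[0,1]$ combined with $\varepsilon>0$ is precisely what provides the strictly positive lower bound (needed to exclude Zeno behavior via Proposition \ref{cor:zeno}) and the uniform upper bound $\sigma+\varepsilon$ (needed to translate the ISpS property into the steady-state bound on the consensus error via Proposition \ref{cor:iss}). I do not anticipate any deeper obstacle, since all the nontrivial analytical work has already been done upstream.
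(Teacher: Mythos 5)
Your proposal matches the paper's own proof: both verify that the NN-ETM threshold satisfies $\varepsilon \leq \delta(t) \leq \sigma+\varepsilon$ because $\eta_i(t)\in[0,1]$, then obtain the bound \eqref{eq:bound-nnetm} from Proposition \ref{cor:iss} and the minimum inter-event time from Proposition \ref{cor:zeno} with $\ubar{\delta}=\varepsilon$, $\bar{\delta}=\sigma+\varepsilon$. The only cosmetic difference is that you invoke Proposition \ref{cor:zeno} before Proposition \ref{cor:iss}, which is immaterial since the bound is only claimed for times at which the trajectory exists.
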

\begin{proof}
The first part of the proof to show \eqref{eq:bound-nnetm} follows from Proposition \ref{cor:iss}, noting that $\delta(t) \leq \sigma + \varepsilon, \, \forall t$, by design of the threshold in our NN-ETM as \eqref{eq:trigger-learning}, since $\eta_i(t) \in [0, 1] \, \forall t$. 
The second part follows from Proposition \ref{cor:zeno}, noting that condition \eqref{it:threshold} is fulfilled by our NN-ETM setting $\ubar{\delta} = \varepsilon>0$, and $\bar{\delta} = \sigma + \varepsilon$.
\end{proof}

Note that NN-ETM fulfills item \eqref{it:bounded-u} of Theorem \ref{prop:iss} by construction. In addition, item \eqref{it:zeno} is also fulfilled, according to Corollary \ref{cor:performance-nnetm}. Then, due to Theorem \ref{prop:iss}, this ETM design is admissible, independent of the consensus protocol. 
Therefore, the NN-ETM can be used alongside any consensus protocol that fulfills the ISpS requirement (\ref{it:iss}), providing a generic solution to achieve a bounded consensus error with the performance guarantees established in Corollary \ref{cor:performance-nnetm}.
Additionally, depending on the NN architecture and training, different behaviors may be learned for $\eta_i(t)$, exploiting the flexibility of data-driven techniques.

\begin{remark}
While providing freedom for the NN to optimize the performance of the setup, NN-ETM still relies on two user-defined constants $\sigma, \varepsilon$. Note that these parameters are necessary to formally guarantee the stability and exclusion of Zeno behavior in the event-triggered consensus protocol. They can be chosen to guarantee a desired bound of the consensus error, noting that it is given by \eqref{eq:bound-nnetm}, and to define a minimum inter-event time, by similar analysis to the proof of Proposition \ref{cor:zeno}.
\end{remark}

\subsection{Neural Network Architecture}

We set $\eta_i(t)$ as the output of a NN, defining a nonlinear map between input information located at agent $i\in\mathcal{V}$ to the interval $[0,1]$. To design an appropriate architecture for the NN, we take into account the desired input and output information. 

First, the available information to be used as input at an agent $i$ is composed of the following:
\begin{itemize}
    \item \textbf{Local data:} Agent $i$'s local variable $\{\mf{z}_i(t') \ | \ t' \in [0, t]\}$, its sequence of events $\tau_0^i, \tau_1^i, \dots, \tau_t^i$ and the information transmitted at events, given by $\{ \mf{m}_i(\mf{z}_i(\tau_k^i)) \ | \ \tau_k^i < t \}$.
    \item \textbf{Data from neighbors:} The number of neighbors $| \mathcal{N}_i |$, their sequence of events $\tau_0^j, \tau_1^j, \dots, \tau_t^j$ and the information received when a neighbor triggers an event, $\{ \mf{m}_j(\mf{z}_j(\tau_k^j)) \ | \ j \in \mathcal{N}_i, \, \tau_k^j < t\}$.
\end{itemize}
These pieces of information are the only ones that each agent can use to maintain a distributed solution. Different combinations of these factors can be used as input, according to the desired behavior for the NN-ETM. Then, the input layer of the NN needs to have appropriate dimensions for the information to be used.

As output, we require $\eta_i(t) \in [0,1]$. To constrain it to this interval, the output layer has only one neuron and an appropriate activation function, such as a sigmoid, which takes values in the desired range. 

If the input and output requirements are met, under the proposed framework there is freedom of design for the internal details of the NN architecture. The simplest option is to use a multi-layer perceptron, which is composed of the input and output layers, along with one or more hidden layers and nonlinear activation functions. This architecture is known to be a universal function approximator. We use this option in our proof of concept in Section \ref{sec:experiments}. However, more sophisticated alternatives can be used, such as recurrent NNs, where previous outputs of the NN are also exploited. Note that such general structure of the NN-ETM subsumes different ETMs in the literature: for example, learning a fixed value of $\eta_i(t)$ regardless of the input would be equivalent to a fixed threshold approach, while recurrent NNs may emulate a DETM since the internal state held by these networks resembles the behavior of the auxiliary dynamics in DETMs.

\subsection{Training the NN-ETM}\label{sec:training}
Now, we propose a training procedure to learn the weights inside the NN in our NN-ETM. We use the common approach of backpropagation, which computes the gradient of a cost function with respect to the NN's parameters, in order to optimize them using gradient descent. The goal of the training process is to find network weights that minimize a cost function.

We explain our process for an average consensus algorithm, which we will use in our proof of concept in Section \ref{sec:experiments}, as an example. However, note that this process can be generalized to other consensus protocols, with minimal modifications. We consider the following protocol, which is a linear dynamic consensus algorithm \cite{Kia2019}, adapted to event-triggered communication,
\begin{equation}\label{eq:lin-cons}
\begin{aligned}
    \dot{\mf{z}}_i(t) &= \dot{\mf{r}}_i(t) - \kappa \sum_{j \in \mathcal{N}_i} ( \mf{z}_i(\tau_t^i) - \mf{z}_j(\tau_t^j) ) ,
\end{aligned}
\end{equation}
where $\mf{m}_i(\bullet) = (\bullet)$ and $\mf{r}_i(t) \in \mathbb{R}^n$ is a local reference signal for agent $i$. The goal for the agents is to reach agreement as $\mf{z}_i(t)=\bar{\mf{z}}(t) = (1/N)\sum_{i=1}^N \mf{z}_i(t)$. Such consensus trajectory will correspond to the average of the reference signals $(1/N)\sum_{i=1}^N \mf{r}_i(t)$ by setting an appropriate initialization as $\sum_{i=1}^{N} \mf{z}_i(0) = \mf{0}$.

As summarized in Algorithm \ref{alg:training}, the proposed NN-ETM is trained by running the event-triggered consensus algorithm over multiple pre-generated simulations of reference signals, computing a cost for the performance for each one, and performing the gradient backpropagation and weight update for the NN. We repeat this process for several iterations (epochs) until a minimum for the cost function is reached.

\begin{algorithm}
\begin{algorithmic}[1]
\State Generate batch of sequences for agents' local $\mf{r}_i(t)$ 
\State Set parameters $\sigma, \varepsilon$ for \eqref{eq:trigger-learning} 
\State Load NN model and initialize weights 
\State Set optimizer and training configuration 
\For {each training epoch} 
    \State cost $\leftarrow$ 0 
    \For{each sequence in the batch} %
        \State Simulate \eqref{eq:lin-cons} for $t\in[0,T]$ with current weights (using fuzzy ETM from \phantom{forfor \ } Section \ref{sec:fuzzy})
        \State Compute cost $\mathcal{J}$ \eqref{eq:cost-function} for the simulation 
        \State cost $\leftarrow$ cost + $\mathcal{J}$ 
    \EndFor
    \State Update network weights using backpropagation 
\EndFor
\caption{Training}\label{alg:training}
\end{algorithmic}
\end{algorithm}

Despite the theoretical advantages of a general NN, appropriately training such models requires several heuristics and strategies often used in the literature. In the following, we detail some relevant considerations we used for the training process, which may be helpful for practitioners. Recall that the stability guarantees provided earlier are true regardless of this training procedure, which is only used to improve performance in terms of accuracy and network usage.

\subsubsection{Parameter Sharing and Scalability}

While each agent can have its own NN, with possibly different architecture and individual training, another interesting option in this work is that all agents can have a copy of the same NN. This perspective is advantageous since the NN can be trained using a small number of agents and tested for scalability with a bigger number of agents when deployed. For this reason, we use \emph{parameter sharing} \cite{Gupta2017} to train the network. This is, the NN for $\eta_i(t)$ shares the same weights for all agents $i \in \mathcal{V}$ during the training stage. The learning process is done using the experience of all agents involved. However, at the execution step, each agent can have a different behavior due to using its own local information as an input for the NN. 

\subsubsection{Generation of Agents' Reference Signals}

We generate a \textit{batch} of sequences of reference signals prior to the training process, where each batch is of the form $\{\mf{r}_i(t)\}_{i=1}^N, t\in\{0,h,2h,\dots,T\}$ . Here, $h>0$ is a discretization step for the simulation, and $T>0$ is the experiment length, as an integer multiple of $h$. Generating simulations of the algorithm, one for each sequence, over the whole batch and back-propagating using the resulting cost constitutes one training epoch.

\subsubsection{Cost Function}\label{sec:cost}

In event-triggered consensus problems, there is an evident trade-off between the communication rate of the agents and the overall performance of the consensus algorithm. Therefore, a balance needs to be found between the consensus error and communication load. 
For a given experiment of length $T$, let the mean square error $\mathcal{E}$ of the consensus variables $\mf{z}_i(t)$ with respect to the desired consensus trajectory $\bar{\mf{z}}(t)$ and the communication rate $\mathcal{C}$ be 
\begin{equation}
    \mathcal{E} = \frac{h}{N T} \sum_{i=1}^N \sum_{k=0}^{T/h} \| \bar{\mf{z}}(kh) - \mf{z}_i(kh) \|^2, \quad \quad \mathcal{C} = \frac{h}{N T} \sum_{i=1}^N e_i,
\end{equation}
where $e_i$ is the number of events triggered by agent $i$ during $t\in[0,T]$ and $h$ is the simulation step. Note that $\mathcal{C}$ is normalized between 0, being no communication, and 1, representing communication at each step of the simulation.

Moreover, note that some dynamic consensus algorithms have a non-zero steady-state error even when run at full communication (e.g. the one in \eqref{eq:lin-cons}). Therefore, we take into account the relative error of the event-triggered implementation with respect to the full communication case, since the error cannot be optimized beyond a certain point. Denoting as $\mathcal{E}_{\mathsf{fc}}$ the error for the full communication case and $\mathcal{E}_{\mathsf{ev}}$ the event-triggered case, the relative error is given by $\mathcal{E}_\mathsf{r} = (\mathcal{E}_{\mathsf{ev}} - \mathcal{E}_{\mathsf{fc}})/\mathcal{E}_{\mathsf{fc}}$.
Then, we design the cost function to be minimized by the NN as follows, describing the desired trade-off between error and communication:
\begin{equation}\label{eq:cost-function}
    \mathcal{J} = \mathcal{E}_\mathsf{r} + \lambda \mathcal{C},
\end{equation}
where $\lambda \geq 0$ is a user-defined parameter to assign relative weight to the error and communication costs. Increasing $\lambda$ will prioritize reducing the communication load at the cost of a higher consensus error, and vice versa.

\subsubsection{Fuzzy Event-Triggering Mechanism}
\label{sec:fuzzy}

Note that the training process relies on the computation of gradients, which requires the operations involved to be differentiable. The \emph{if-else} event-triggering decision, however, is non-differentiable: it results in two separate information updates in case of triggering an event or not doing so. To overcome this issue, we \emph{fuzzify} the ETM during the training stage. Consider a sigmoid function
$$
\nu_i(t) =  \sigmoid(\alpha \ ( \| \mf{z}_i(t) - \mf{z}_i(\tau_k^i) \| -\delta(t))
$$
where $\alpha \gg 1$ is a constant to make the sigmoid function steeper. If $\| \mf{z}_i(t) - \mf{z}_i(\tau_k^i) \| \geq \delta(t)$, an event should be triggered with $\nu_i(t) \approx 1$, and viceversa. Then, the information update for an arbitrary variable $\mf{v}(t)$, that takes a value $\mf{v}_{\mathsf{event}}(t)$ if an event is triggered by agent $i$ at time $t$ and $\mf{v}_{\mathsf{no\ event}}(t)$ otherwise, can be approximated in a differentiable fashion as follows:
\begin{equation}
    \mf{v}(t) \leftarrow \nu_i(t) \mf{v}_{\mathsf{event}}(t) + (1-\nu_i(t)) \mf{v}_{\mathsf{no\ event}}(t)
\end{equation}
This approximation is only needed for the training process. At test time, the regular NN-ETM is used.

\subsubsection{Weight Initialization}

To facilitate convergence to an advantageous solution for the network's weights, we can provide a suboptimal solution as the initial condition: a fixed event threshold. We achieve this by means of a pre-training process, in which the network learns to output $\eta_i(t) = 0.5$ at every step regardless of the input values. The pre-training also follows the process described in Algorithm \ref{alg:training}, but a cost function that penalizes the difference of the outputs $\eta_i(t)$ with respect to the target value is used. Then, we use the weights learned at this stage as initialization when training for the correct cost function in \eqref{eq:cost-function}.

\section{ISS Consensus under NN-ETM}\label{sec:iss-consensus}

In this section, we provide some examples of theoretical analysis for different consensus protocols. The goal for this section is twofold. First, we want to highlight that the proposed ISS theoretical framework to decouple the analysis of the consensus protocol from the choice of ETM can be used for different kinds of consensus protocols. We present two cases of such analysis, with different choices of $\mf{f}_i,\mf{m}_j$ for linear and nonlinear protocols in the literature, adapted to event-triggered communication. The second objective is to show that, with our proposed NN-ETM, we can explicitly write a bound for the consensus disagreement by using the ISS analysis. The bound is dependent on the choice of parameters $\sigma, \varepsilon$ in \eqref{eq:trigger-learning}. Therefore, by performing this analysis, the parameters can be chosen such that a maximum desired consensus error is guaranteed.

\subsection{Case I: ISpS Linear Consensus}

Consider the event-triggered linear dynamic consensus algorithm given in \eqref{eq:lin-cons}, with scalar values of $z_i(t), \, r_i(t)$ for simplicity, which can be equivalently written as
\begin{equation}\label{eq:lin-cons:iss}
\dot{z}_i(t) = \dot{r}_i(t) - \kappa \sum_{j \in \mathcal{N}_i} ( z_i(t)+u_i(t) - z_j(t) - u_j(t) ) ,
\end{equation}
with $u_i(t) = z_i(\tau_t^i)-z_i(t)$. Setting $\mf{r}(t) = [r_1(t), \dots, r_N(t)]^\top$, we write the protocol \eqref{eq:lin-cons:iss} in vector form, similarly to \eqref{eq:consensus-ev-vec}, as
\begin{equation}
\begin{aligned}
    \dot{\mf{z}}(t) &= \mf{f}(t, \mf{z}(t), \mf{u}(t)) = \dot{\mf{r}}(t) - \kappa \mf{Q}_\mathcal{G} \mf{z}(t) - \kappa \mf{Q}_{\mathcal{G}} \mf{u}(t).
\end{aligned}
\end{equation}
Then, recalling that $\mf{x}(t) = \mf{Hz}(t)$ and $\mf{H}\mf{Q}_\mathcal{G} = \mf{Q}_\mathcal{G}$, the dynamics for the disagreement, according to \eqref{eq:error-dyn}, are given by
\begin{equation}
\label{eq:dis:linear}
\begin{aligned}
    \dot{\mf{x}}(t) &= \mf{h}(t, \mf{x}(t), \mf{u}(t)) = \mf{H} \dot{\mf{z}}(t) \\&= \mf{H} \dot{\mf{r}}(t) - \kappa \mf{Q}_\mathcal{G} \mf{x}(t) - \kappa \mf{Q}_{\mathcal{G}} \mf{u}(t).
\end{aligned}
\end{equation}

\begin{theorem}\label{th:linear}
Assume that $|\dot{r}_i(t)|\leq R, \, \forall t\geq 0$, for some $R\geq 0$. Then. the disagreement dynamics \eqref{eq:dis:linear} are ISpS. Moreover, under the NN-ETM \eqref{eq:send:on:delta}-\eqref{eq:trigger-learning}, the disagreement is bounded by
\begin{equation}\label{eq:linearcons-nnetm}
     \|\mf{x}(t)\| \leq \exp(-\kappa \lambda_2(\mathcal{G}) t) \|\mf{x}(0)\| + \frac{\lambda_{\max}(\mf{Q}_\mathcal{G})}{ \lambda_2(\mathcal{G})} (\sigma + \varepsilon) + \frac{\sqrt{N}R}{\kappa \lambda_2(\mathcal{G})}.
\end{equation}
\end{theorem}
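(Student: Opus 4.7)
The plan is to prove the ISpS property by a direct Lyapunov argument on $V(\mf{x})=\tfrac{1}{2}\|\mf{x}\|^2$ and then specialize to the NN-ETM disturbance bound already established in Corollary \ref{cor:performance-nnetm}. The starting point is the disagreement dynamics \eqref{eq:dis:linear}, for which $\mf{x}(t)$ lies in the subspace orthogonal to $\mathds{1}_N$ because $\mf{x}=\mf{Hz}$. This is the key structural fact: on that subspace, $\mf{x}^\top \mf{Q}_\mathcal{G}\mf{x}\geq \lambda_2(\mathcal{G})\|\mf{x}\|^2$, which provides the negative-definite term needed for ISpS.

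First I would differentiate $V$ along \eqref{eq:dis:linear} to get
$$\dot V=\mf{x}^\top \mf{H}\dot{\mf{r}}(t)-\kappa\mf{x}^\top\mf{Q}_\mathcal{G}\mf{x}-\kappa\mf{x}^\top\mf{Q}_\mathcal{G}\mf{u}(t).$$
Using Cauchy--Schwarz, $\|\mf{H}\|\leq 1$ and $\|\dot{\mf{r}}(t)\|\leq\sqrt{N}R$, together with $\|\mf{Q}_\mathcal{G}\mf{u}\|\leq \lambda_{\max}(\mf{Q}_\mathcal{G})\|\mf{u}\|$, and the connectivity lower bound on the quadratic form, I obtain an inequality of the form
$$\dot V\leq -\kappa\lambda_2(\mathcal{G})\|\mf{x}\|^2+\|\mf{x}\|\bigl(\sqrt{N}R+\kappa\lambda_{\max}(\mf{Q}_\mathcal{G})\|\mf{u}(t)\|\bigr).$$
Dividing by $\|\mf{x}\|$ (and handling $\mf{x}=\mf{0}$ by Dini-derivative arguments) yields the scalar differential inequality
$$\tfrac{d}{dt}\|\mf{x}(t)\|\leq -\kappa\lambda_2(\mathcal{G})\|\mf{x}(t)\|+\sqrt{N}R+\kappa\lambda_{\max}(\mf{Q}_\mathcal{G})\|\mf{u}(t)\|.$$

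Next I would apply the comparison lemma, yielding
$$\|\mf{x}(t)\|\leq e^{-\kappa\lambda_2(\mathcal{G})t}\|\mf{x}(0)\|+\frac{\sqrt{N}R}{\kappa\lambda_2(\mathcal{G})}+\frac{\lambda_{\max}(\mf{Q}_\mathcal{G})}{\lambda_2(\mathcal{G})}\sup_{s\in[0,t]}\|\mf{u}(s)\|,$$
which matches Definition \ref{def:iss} with $\beta(r,t)=e^{-\kappa\lambda_2(\mathcal{G})t}r$, $\gamma(\rho)=\lambda_{\max}(\mf{Q}_\mathcal{G})\rho/\lambda_2(\mathcal{G})$, and offset $c=\sqrt{N}R/(\kappa\lambda_2(\mathcal{G}))$; this establishes ISpS. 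For the second claim, I would invoke the NN-ETM bound $\|\mf{u}(t)\|\leq \delta(t)\leq \sigma+\varepsilon$ (with the scalar per-agent trigger giving $|u_i(t)|\leq\delta(t)$, and the aggregate norm controlled accordingly) to substitute into the inequality and recover \eqref{eq:linearcons-nnetm}.

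The main obstacle I expect is bookkeeping the $\sqrt{N}$ factors consistently between the per-agent trigger threshold and the stacked vector $\mf{u}(t)$, and making sure the coefficients in the final bound agree with the statement; this may require absorbing a factor into the definition of $\gamma$ versus using $\|\mf{u}\|_\infty$-style bounds rather than Euclidean ones. Beyond that, the Lyapunov computation is standard, with the essential ingredient being the algebraic connectivity bound enabled by the projection onto the disagreement subspace.
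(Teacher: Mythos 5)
Your Lyapunov route is correct and lands on exactly the paper's ISpS data, but it is a genuinely different argument. The paper exploits linearity directly: it writes the variation-of-constants solution $\mf{x}(t)=\exp(-\kappa\mf{Q}_\mathcal{G}t)\mf{x}(0)+\int_0^t\exp(-\kappa\mf{Q}_\mathcal{G}s)\bigl(\mf{H}\dot{\mf{r}}(t-s)-\kappa\mf{Q}_\mathcal{G}\mf{u}(t-s)\bigr)\,\mathrm{d}s$, uses $\mf{x}=\mf{H}\mf{x}$ and $\lambda_{\max}(\mf{H}\exp(-\kappa\mf{Q}_\mathcal{G}t))=\exp(-\kappa\lambda_2(\mathcal{G})t)$ for the homogeneous part, and bounds the convolution integral by $\tfrac{1}{\kappa\lambda_2(\mathcal{G})}\bigl(1-e^{-\kappa\lambda_2(\mathcal{G})t}\bigr)$ times the suprema of the forcing terms, obtaining the same $\beta(r,t)=e^{-\kappa\lambda_2(\mathcal{G})t}r$, $\gamma(\rho)=\lambda_{\max}(\mf{Q}_\mathcal{G})\rho/\lambda_2(\mathcal{G})$ and $c=\sqrt{N}R/(\kappa\lambda_2(\mathcal{G}))$ as you. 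Your derivative of $V=\tfrac12\|\mf{x}\|^2$ plus the comparison lemma is equally valid, provided you note (as you implicitly do) that $\mf{x}(t)=\mf{H}\mf{z}(t)\perp\mathds{1}_N$ for all $t$, so $\mf{x}^\top\mf{Q}_\mathcal{G}\mf{x}\ge\lambda_2(\mathcal{G})\|\mf{x}\|^2$ applies along the whole trajectory; the trade-off is that your approach needs the Dini-derivative care at $\mf{x}=\mf{0}$ but generalizes more readily beyond linear dynamics, whereas the paper's explicit-solution computation avoids that technicality at the cost of being tied to linearity.

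On the $\sqrt{N}$ bookkeeping you flag: it does not go away. The per-agent trigger \eqref{eq:send:on:delta}--\eqref{eq:trigger-learning} gives $|u_i(t)|\le\delta(t)\le\sigma+\varepsilon$, hence $\|\mf{u}(t)\|\le\sqrt{N}(\sigma+\varepsilon)$ in the Euclidean norm your (and the paper's) estimate uses, so the gain term you actually obtain is $\tfrac{\lambda_{\max}(\mf{Q}_\mathcal{G})}{\lambda_2(\mathcal{G})}\sqrt{N}(\sigma+\varepsilon)$. This is also what the paper's own chain produces, since Proposition \ref{cor:iss} defines $\gamma'(\bullet)=\gamma(\sqrt{N}\,\bullet)$ and Corollary \ref{cor:performance-nnetm} bounds the disagreement by $\gamma'(\sigma+\varepsilon)$; the factor $\sqrt{N}$ is simply missing from the displayed bound \eqref{eq:linearcons-nnetm}, which appears to be an omission in the statement rather than something either argument recovers. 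So do not try to absorb it: your writing $\|\mf{u}(t)\|\le\delta(t)$ is not literally correct for the stacked vector, and the honest conclusion of your proof is the bound with the extra $\sqrt{N}$ on the $(\sigma+\varepsilon)$ term.
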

\begin{proof}
It can be verified that the disagreement's dynamics have the following explicit solution:
\begin{equation}
\begin{aligned}
\mf{x}(t) = \exp(-\kappa \mf{Q}_\mathcal{G} t)\mf{x}(0)  + \int_{0}^{t} \exp(-\kappa \mf{Q}_\mathcal{G} s)(\mf{H} \dot{\mf{r}}(t - s) - \kappa \mf{Q}_{\mathcal{G}} \mf{u}(t - s)) \nd s.
\end{aligned}
\end{equation}
Since $\mf{x}(t) = \mf{Hz}(t) = \mf{Hx}(t)$ due to $\mf{H}=\mf{H}^2$, then
\begin{equation}
\begin{aligned}
\|\mf{x}(t)\| \leq \| \mf{H}\exp(-\kappa \mf{Q}_\mathcal{G} t)\mf{x}(0)\| + \int_{0}^{t} \| \mf{H}\exp(-\kappa \mf{Q}_\mathcal{G} s)(\mf{H} \dot{\mf{r}}(t - s) - \kappa \mf{Q}_{\mathcal{G}} \mf{u}(t - s)) \| \nd s. 
\end{aligned}
\end{equation}
For the first term, we have that
\begin{equation}
\begin{aligned}
&\| \mf{H} \exp(-\kappa \mf{ Q}_\mathcal{G} t)\mf{x}(0) \| \leq \lambda_{\max}(\mf{H} \exp(-\kappa \mf{Q}_\mathcal{G} t)) \|\mf{x}(0)\| \\ &= \exp(-\kappa \lambda_2(\mathcal{G}) t) \|\mf{x}(0)\| =: \beta(\|\mf{x}(0) \|, t),
\end{aligned}
\end{equation}
recalling that $\lambda_{\max}(\bullet)$ represents the largest eigenvalue of a matrix and $\lambda_2(\mf{Q}_\mathcal{G})$ denotes the algebraic connectivity of the graph.
For the second term, we have
\begin{equation}
\begin{aligned}
&\int_{0}^{t} \| \mf{H}\exp(-\kappa \mf{Q}_\mathcal{G} s)(\mf{H} \dot{\mf{r}}(t - s) - \kappa \mf{Q}_{\mathcal{G}} \mf{u}(t - s)) \| \nd s \\
&\leq \sup_{s \in [0, t]} \| \dot{\mf{r}}(s) \| \int_{0}^{t} \| \mf{H}\exp(-\kappa \mf{Q}_\mathcal{G} s) \mf{H}\| \nd s \\ &\quad + \kappa \lambda_{\max}(\mf{Q}_\mathcal{G}) \sup_{s \in [0, t]} \|\mf{u}(s)\| \int_{0}^{t} \| \mf{H}\exp(-\kappa \mf{Q}_\mathcal{G} s) \mf{H}\| \nd s \\
&\leq \frac{1}{\kappa \lambda_2(\mathcal{G})} (1-\exp(-\kappa \lambda_2(\mathcal{G})t) \big(\sup_{s \in [0, t]} \| \dot{\mf{r}}(s) \| + \kappa \lambda_{\max}(\mf{Q}_\mathcal{G}) \sup_{s \in [0, t]} \|\mf{u}(s)\|\big) \\
&\leq \frac{1}{\kappa \lambda_2(\mathcal{G})} \left(\sup_{s \in [0, t]} \| \dot{\mf{r}}(s) \| + \kappa \lambda_{\max}(\mf{Q}_\mathcal{G}) \sup_{s \in [0, t]} \|\mf{u}(s)\|\right).
\end{aligned}
\end{equation}
Therefore, it follows that
\begin{equation}
\begin{aligned}
&\gamma\left(\sup_{s\in[0,t]} \| \mf{u}(s)\|\right) := \frac{\lambda_{\max}(\mf{Q}_\mathcal{G})}{ \lambda_2(\mathcal{G})}  (\sup_{s \in [0, t]} \|\mf{u}(s)\|).
\end{aligned}
\end{equation}
Note that $\| \dot{\mf{r}}(t) \| \leq \sqrt{N}R$ is complied by assumption. Hence, \eqref{eq:iss} follows with
$c := \sqrt{N}R /(\kappa \lambda_2(\mathcal{G}))$, showing that the protocol is ISpS.

Finally, since we now have the functions $\beta, \, \gamma$ and the constant $c$, we can use the result in Corollary \ref{cor:performance-nnetm} to particularize the convergence result to our NN-ETM, yielding \eqref{eq:linearcons-nnetm}, which completes the proof.
\end{proof}

\subsection{Case II: ISS Nonlinear Consensus}

Consider the following nonlinear consensus protocol using $m$th order sliding modes from \cite{noisy-edcho}. Here, we set $n=m+1$ for some arbitrary $m$ which depend on the application, and $\mf{z}_i(t)=[z_{i,0}(t),\dots,z_{i,m}(t)]^\top$, $\mf{m}_i(\mf{z}_i) = z_{i,0}$. While \cite{noisy-edcho} considers continuous-time communication, here we adapt it to event-triggered communication as follows,
\begin{equation}\label{eq:noisy-edc}
\begin{aligned}
&\begin{array}{rl}
\dot{z}_{i,\mu}(t) =& z_{i, \mu+1}(t)-k_\mu \sum_{j \in \mathcal{N}_i} \sgn{z_{i,0}(\tau_t^i) - z_{j,0}(\tau_t^j)}{\frac{m-\mu}{m+1}},\\ & \text{for } 0 \leq \mu < m ,\\
\dot{z}_{i,m}(t) =&r_i^{(m+1)}(t)-k_m \sum_{j \in \mathcal{N}_i} \sgn{z_{i,0}(\tau_t^i) - z_{j,0}(\tau_t^j)}{0}, \\ 
\end{array}
\end{aligned}
\end{equation}
for appropriate design parameters $\{k_\mu\}_{\mu=0}^m $. Here, $r_i(t)$ is a reference signal for agent $i$. Moreover, $\lceil x\rfloor^\alpha:= |x|^\alpha\text{sign}(x)$ for $\alpha>0$ and $\lceil x\rfloor^0:={\text{sign}}(x)$.

Similarly as before, we write \eqref{eq:noisy-edc} in terms of a disturbed version of the ideal system with continuous time communication. Hence, write the system in vector form as
\begin{equation}
\begin{aligned}
&\begin{array}{rl}
\dot{\mf{z}}_\mu(t) =& \mf{z}_{\mu+1}(t) - k_\mu \mf{D}_\mathcal{G}\sgn{\mf{D}_\mathcal{G}^\top(\mf{z}_0(t) + \mf{u}(t))}{\frac{m-\mu}{m+1}}, \\
\dot{\mf{z}}_m(t) =& \mf{r}^{(m+1)}(t) - k_m \mf{D}_\mathcal{G}\sgn{\mf{D}_\mathcal{G}^\top(\mf{z}_0(t) + \mf{u}(t))}{0}, \\
\end{array}\\
\end{aligned}
\end{equation}
with $\mf{z}_\mu(t) = [z_{1, \mu}(t), \dots,z_{N, \mu}(t)]^\top$ and $\mf{u}(t)$ with $u_i(t) = z_{i,0}(\tau_t^i)-z_{i,0}(t)$. As in \cite{noisy-edcho}, since $\mf{D}_{\mathcal{G}}^\top \mf{H} = \mf{D}_{\mathcal{G}}^\top$, it can be verified that the disagreement dynamics $\mf{x}_\mu(t) = \mf{H}\mf{z}_\mu(t)$ satisfy the differential inclusion
\begin{equation}\label{eq:edcho-disagr}
\begin{aligned}
&\begin{array}{rl}
\dot{\mf{x}}_\mu(t) =& \mf{x}_{\mu+1}(t) - k_\mu \mf{D}_\mathcal{G}\sgn{\mf{D}_\mathcal{G}^\top(\mf{x}_0(t) + \mf{u}(t))}{\frac{m-\mu}{m+1}},  \\
\dot{\mf{x}}_m(t) \in& [-L, L]^{N} - k_m \mf{D}_\mathcal{G}\sgn{\mf{D}_\mathcal{G}^\top(\mf{x}_0(t) + \mf{u}(t))}{0}. \\
\end{array}\\
\end{aligned}
\end{equation}

\begin{theorem}\label{th:edcho}
Assume that there exist $R\geq 0$ such that $|{r}^{(m+1)}_i(t)|\leq R, \forall t\geq 0$. Set the parameters $\{k_\mu\}_{\mu=0}^m $ as in \cite{noisy-edcho}. Then, the disagreement dynamics in \eqref{eq:edcho-disagr} are ISS. Moreover, under the NN-ETM \eqref{eq:send:on:delta}-\eqref{eq:trigger-learning}, the disagreement is bounded by
\begin{equation}\label{eq:edc-nnetm}
\begin{aligned}
     \|\mf{x}(t)\| \leq B(\|\mf{x}(0)\|)\exp(\mathcal{T}(\|\mf{x}(0)\|)-t)  + \sqrt{N}\max_{0\leq\mu\leq m}c_\mu (\sigma + \varepsilon)^{\frac{m-\mu+1}{m+1}},
\end{aligned}
\end{equation}
for some constants $c_\mu \geq 0$, for $\mu=0, \dots, m$, and setting the functions
\begin{equation}\label{eq:b-t-functions}
\begin{aligned}
B(x) &:= \sup\{\|\mf{x}(t;\mf{x}(0))\|:\|\mf{x}(0)\|=x, t\in[0,T(\mf{x}(0))]\},\\
\mathcal{T}(x) &:= \sup\{T(\mf{x}(0)) : \|\mf{x}(0)\|=x\}.
\end{aligned}
\end{equation}
\end{theorem}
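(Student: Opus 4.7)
The plan is to transfer the ISS analysis of the continuous-communication version of this $m$th-order sliding-mode protocol, established in \cite{noisy-edcho}, to the event-triggered formulation \eqref{eq:edcho-disagr}. Under the gain assignment prescribed there, the unperturbed disagreement dynamics are homogeneous with respect to a weighted dilation of weights $(m+1-\mu)/(m+1)$ and are finite-time convergent to the origin; the uniformly bounded reference perturbation $r_i^{(m+1)}(t)\in[-R,R]$ enters $\dot{\mf{x}}_m$ alongside the $k_m$-term and, for the prescribed choice of $k_m$, is dominated by the sliding-mode action. This yields a genuine ISS property with $c=0$, rather than only ISpS, so that Corollary \ref{cor:performance-nnetm} will be used in its $c=0$ form.

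Next I would exploit that the event-triggering error $\mf{u}(t)$ enters only inside the homogeneous maps $\sgn{\mf{D}_\mathcal{G}^\top(\mf{x}_0(t)+\mf{u}(t))}{(m-\mu)/(m+1)}$, that is, in the same position as $\mf{x}_0$. A standard weighted-homogeneity scaling argument (equivalently, the ISS lemma used in \cite{noisy-edcho}) then shows that the per-level ISS gain inherits the exponent of the corresponding homogeneity weight, giving $\limsup_{t\to\infty}\|\mf{x}_\mu(t)\| \leq c_\mu (\sup_s\|\mf{u}(s)\|)^{(m-\mu+1)/(m+1)}$ for some $c_\mu\geq 0$ depending on the gains and on the graph. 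Aggregating across the $m+1$ levels via $\|\mf{x}\|\leq \sqrt{N}\max_\mu\|\mf{x}_\mu\|$ and invoking Corollary \ref{cor:performance-nnetm} with $\|\mf{u}(t)\|\leq\sqrt{N}(\sigma+\varepsilon)$ produces the steady-state term on the right-hand side of \eqref{eq:edc-nnetm}, the prefactor $\sqrt{N}$ out front matching the aggregation step and any residual graph/dimension factors from the scaling being absorbed into the $c_\mu$'s.

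For the transient contribution, the finite-time convergence of the unperturbed system must be repackaged as a single class-$\mathcal{KL}$ envelope compatible with Definition \ref{def:iss}. The quantities in \eqref{eq:b-t-functions} upper-bound, respectively, the peak of $\|\mf{x}(t)\|$ during the transient and a worst-case settling time, each as a function of $\|\mf{x}(0)\|$. With $\beta(r,t):=B(r)\exp(\mathcal{T}(r)-t)$, the inequality $\|\mf{x}(t)\|\leq \beta(\|\mf{x}(0)\|,t)$ holds on $[0,\mathcal{T}(\|\mf{x}(0)\|)]$ because $\beta\geq B$ there, and extends to $t>\mathcal{T}(\|\mf{x}(0)\|)$ because the trajectory has by then entered the ISS steady-state set dominated by the gain term from the previous paragraph. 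Monotonicity of $B$ and $\mathcal{T}$ in their argument (which follows from the dilation scaling of homogeneous systems) ensures $\beta\in\mathcal{KL}$, completing the ISS certificate that the statement claims and, together with the ISS gain above, the bound \eqref{eq:edc-nnetm}.

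The main obstacle is the ISS derivation itself, and specifically the verification that each coordinate level $\mu$ inherits the exponent $(m-\mu+1)/(m+1)$ from the weighted dilation, which requires care because the disturbance $\mf{u}$ enters through the sliding-mode arguments rather than additively on the right-hand side of the differential inclusion. Once the compatibility of $\mf{u}$ with the dilation is recognised (it is injected in the same slot as $\mf{x}_0$, which carries weight $1$ in the dilation), the per-level gains follow, and the remainder of the proof reduces to bookkeeping: aggregating the levels, invoking the NN-ETM bound $\|\mf{u}\|\leq\sqrt{N}(\sigma+\varepsilon)$, and applying Corollary \ref{cor:performance-nnetm} with $c=0$.
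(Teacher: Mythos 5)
Your proposal follows essentially the same route as the paper: it invokes the per-level convergence regions $\|\mf{x}_\mu\|\leq c_\mu U^{(m-\mu+1)/(m+1)}$ from the homogeneity/sliding-mode analysis of \cite{noisy-edcho} (the paper simply cites its Lemma 9), aggregates them into the gain $\gamma(U)=\sqrt{N}\max_\mu c_\mu U^{(m-\mu+1)/(m+1)}$, packages the finite-time transient via $B$ and $\mathcal{T}$ into the exponential $\mathcal{KL}$ envelope, and then applies Corollary \ref{cor:performance-nnetm} with $c=0$. The argument and its level of rigor match the paper's proof, so no further comparison is needed.
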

\begin{proof}
Given an arbitrary bound $U\geq 0$ for $\|\mf{u}(t)\|$, the results from \cite[Lemma 9]{noisy-edcho} imply that the solutions of \eqref{eq:edcho-disagr} converge to the regions $$\| \mf{x}_\mu(t) \| \leq c_{\mu} U^{\frac{m-\mu+1}{m+1}}$$ with constants $c_{\mu}\geq 0$, for $t\geq T(\mf{x}(0))$ and some finite time $T(\mf{x}(0))\geq 0$. Therefore, we have that 
\begin{equation}
\begin{aligned}
    \|\mf{x}(t)\| \leq \sqrt{\sum_{\mu=0}^m c_\mu^2U^{2\frac{m-\mu+1}{m+1}}} \leq  \sqrt{N}\max_{0\leq\mu\leq m}c_\mu U^{\frac{m-\mu+1}{m+1}} =: \gamma(U) 
\end{aligned}
\end{equation}
for all $t\geq T(\mf{x}(0))$. Denote with $\mf{x}(t;\mf{x}(0))$ the solution of \eqref{eq:edcho-disagr} given an initial condition $\mf{x}(0)$. Then, set the functions $B(x), \mathcal{T}(x)$ as in \eqref{eq:b-t-functions}, 
so that
$$
\|\mf{x}(t)\|\leq B(\|\mf{x}(0)\|)\exp(\mathcal{T}(\|\mf{x}(0)\|)-t)=:\beta(\|\mf{x}(0)\|,t)$$
for all $t\in[0,T(\mf{x}(0))]$. Henceforth, the inequality \eqref{eq:iss} follows for this choice of $\gamma,\beta$ for all $t\geq 0$, with $c:=0$, which shows that the consensus protocol is ISS. 

Recalling Corollary \ref{cor:performance-nnetm}, the result in \eqref{eq:edc-nnetm} follows, completing the proof.
\end{proof}

From Theorems \ref{th:linear} and \ref{th:edcho}, note that we have explicit bounds for the consensus disagremeent, which depend on the design parameters of the consensus protocol, the properties of the graph, the bounds of the reference signals, and the choice of $\sigma, \varepsilon$ in the NN-ETM. Thus, by choosing the parameters taking into account these bounds, a desired performance for the consensus protocol under NN-ETM can be formally guaranteed, while still having the neural network with freedom to optimize the error vs. communication trade-off within the allowed limits.

\begin{remark}
Besides complying with the ISpS property required in Theorem \ref{prop:iss}, the previously described Cases I and II for the problem of dynamic consensus also ensure item \eqref{it:existence} in Theorem \ref{prop:iss} as per the analysis in \cite{Kia2019} and \cite{noisy-edcho} respectively, for the case with continuous-time communication. Moreover, the same analysis in such works ensures item (\ref{it:boundedconsensus}) of Proposition \ref{cor:zeno} for the consensus trajectory is complied with in both cases under appropriate initialization. Therefore, either of the two protocols presented in this section used in conjunction with our NN-ETM will comply with all the design criteria in Section \ref{sec:design:criteria}, ensuring predictable performance and stability guarantees. 
\end{remark}

\section{Discussion}

While we have developed and applied the design criteria in Theorem \ref{prop:iss} to design a safe NN-ETM, note that this framework applies to a wide range of works in the literature.  In regards to producing a bounded disturbance $\mf{u}(t)$, designs similar to the ETM in Proposition \ref{cor:iss} can provide this assurance, both for fixed-threshold \cite{Xing2020,event-edc}, adaptive \cite{Meng2015} or dynamic \cite{George2018} approaches. In terms of the ISpS requirement for the consensus protocol, we have already shown detailed examples of fulfillment in the cases from Section \ref{sec:iss-consensus}. In addition, note that the ISpS result is achieved by other consensus protocols in the literature, under different ETMs, e.g. \cite{Liu2023,Qian2023,Xing2020,Kia2015}. Even the cases of \cite{George2018,Meng2015}, which achieve asymptotic convergence to the desired value with their proposed ETMs, can be analyzed under this framework, noting that their event thresholds vanish over time and thus the perturbation caused by it does as well, recalling Proposition \ref{cor:iss}. 

These examples serve not only to validate our proposed design criteria but also to highlight that they can be used to check that different combinations of ETMs and consensus protocols, not necessarily designed ad-hoc, can be connected while guaranteeing a bounded consensus error. Decoupling the conditions for each block facilitates the analysis of different combinations by checking their relevant requirements.

\section{Experiments}\label{sec:experiments}

We train and test our NN-ETM in a simulation example, according to the process detailed in Section \ref{sec:training}.\footnote{Our code is available at: \url{https://github.com/ireneperezsalesa/NN-ETM/}} 
In the following, we provide the details of the considered setup and results.

\subsection{Setup}\label{sec:setup}
For our proof of concept, we consider the dynamic average consensus algorithm \eqref{eq:lin-cons} along with NN-ETM. Recall that we have shown that this protocol fulfills the ISpS requirement for the consensus error and guarantees a bounded error when connected to our NN-ETM, as we have already formally shown in Theorem \ref{th:linear}. We have set the consensus gain to $\kappa = 5$. We consider the scalar case, with $z_i(t),\, r_i(t) \in \mathbb{R}$. For the reference signals, we have used sinusoidals $r_i(t) = a_i + \sin{\omega_i t}$, where $a_i, \, \omega_i$ are chosen at random in the ranges $[1, 5]$ and $[0, 1]$, respectively, so that agents have different signals.

We have chosen a multi-layer perceptron architecture for the NN in the NN-ETM \eqref{eq:trigger-learning}, in order to produce an adaptive threshold $\eta_i(t)$ as a function of the inputs. 
We have used ReLU activation functions for the non-output layers and a sigmoid activation at the output ensuring $\eta_i(t)\in[0,1]$. 
As inputs to an agent's NN, we have chosen its disagreement with respect to its neighbors, $\sum_{j\in\mathcal{N}_i} ({z}_i(t) - {z}_j(\tau_t^j))$, and the time since the last event triggered by the agent, $\Delta_i(t) = t - \tau_t^i$, so that $\eta_i(t)$ adapts accordingly to these values.
For the other parameters in the ETM, we set $\sigma = 0.1, \, \varepsilon = 0.001$.

\subsection{Training Details}
We use Pytorch's \emph{autograd} for our learning experiments, which allows the automatic computation of gradients \cite{paszke2019pytorch}, along with the Adam optimizer with learning rate $5 \cdot 10^{-2}$. Taking into account the parameter sharing strategy, we train the NN-ETM using a network of only $N=2$ agents and then test it on a network with $N=5$. For the training stage, we have generated a batch of $10$ sequences of length $T=10$, with sampling step $h=10^{-3}$. The training process aims to minimize the cost function \eqref{eq:cost-function}, recalling that $\lambda$ is a user-defined parameter. We have repeated the training process for different values of $\lambda$, to validate the fact that the NN learns to adapt the behavior of the setup to different requirements of the error vs. communication trade-off.

\subsection{Simulation Results}

To validate our training results and the scalability of the proposal, we now test the trained NN-ETM on a network of $N=5$ agents over a new batch of 1000 sequences. 
Figure \ref{fig:hist} summarizes the results of relative error $\mathcal{E}_r$ and communication $\mathcal{C}$ (computed as described in Section \ref{sec:cost}) for these simulations. It can be seen that the network is effectively trained to reduce communication according to the value of the weight $\lambda$ in the cost function \eqref{eq:cost-function}: higher values of $\lambda$ cause a decrease in communication. In terms of the error, it is clear that for $\lambda = 0.001$ a smaller error is achieved, according to the usual trade-off between error and communication in event-triggered setups. For $\lambda = 1$, the optimization process for the NN weights has reached a solution where, even though communication is significantly reduced with respect to the case with $\lambda=0.1$, the resulting error is similar. 

\begin{figure}
   \centering
   \includegraphics[width=0.5\columnwidth]{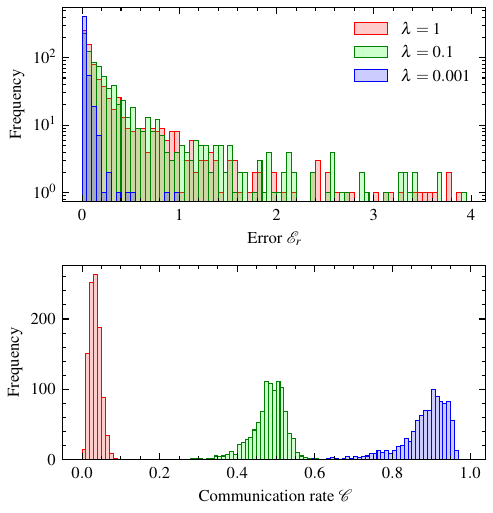} 
    \caption{Error $\mathcal{E}_r$ and communication rate $\mathcal{C}$ for NN-ETMs trained with different values of $\lambda$ in the cost function. The parameter $\lambda$ decides the trade-off between error and communication.}
   \label{fig:hist}
\end{figure}

To illustrate that the NN for $\eta_i(t)$ learns an adaptive behavior depending on the information available to each agent, we include in Figure \ref{fig:eta} a simulation of the event-triggered consensus protocol and the evolution of $\eta_i(t)$ for each agent. Note that the value of $\eta_i(t)$ is different for each agent, even though they all use the same weights for the NN-ETM. This is, while all agents share the same decision policy (the trained NN-ETM), the resulting $\eta_i(t)$ depends on the local observations of each agent. Due to the sinusoidal shape of the signals, the evolution of $\eta_i(t)$ also behaves similarly, with agents increasing transmissions at some points and decreasing them at others.  

\begin{figure}
   \centering
\includegraphics[width=0.5\columnwidth]{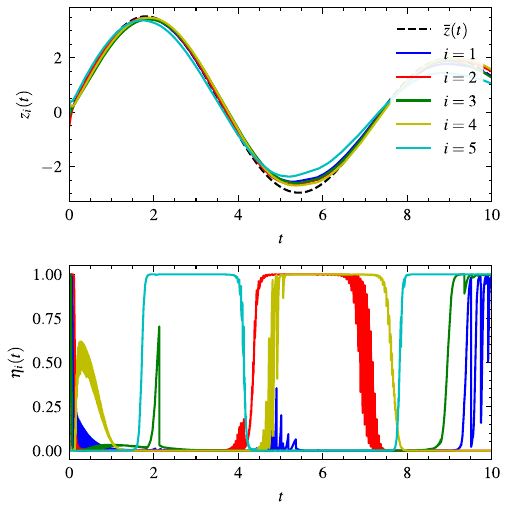};
   \caption{Example of consensus simulation and evolution of the learned variable $\eta_i(t)$ in the NN-ETM for each of the agents $i \in \mathcal{V}$. Adaptive behavior of $\eta_i(t)$ is achieved, which depends on the local observations of each agent.}
   \label{fig:eta}
\end{figure}

\section{Conclusions}\label{sec:conclusions}

In this work, we have studied the problem of safely incorporating neural networks in ETMs for consensus problems.
We have proposed design criteria for event-triggered consensus, which decouple the ETM and consensus protocol design as separate blocks under mild constraints. Under said constraints, namely that the disturbance caused by the ETM is bounded and that the consensus protocol is ISpS concerning communication disturbances, we proposed a general ETM optimized by a neural network. The NN-ETM exploits the advantages of neural networks, offering a flexible alternative to hand-crafted ETMs while providing guarantees of boundedness for the consensus error. We have tested the idea in a simple setup as a proof of concept. Optimizing the architecture and inputs for the neural network to improve performance is left as future work.

\bibliographystyle{IEEEtran}
\bibliography{biblio}

\end{document}